\theoremstyle{plain}
\newtheorem{definition}{Definition}
\newtheorem{assumption}{Assumption}
\newtheorem{lemma}{Lemma}
\newtheorem{theorem}{Theorem}
\definecolor{Gray}{gray}{0.9}
\title{Sequential Deconfounding for Causal Inference\\with Unobserved Confounders}
\author[1]{{Tobias Hatt}{}}
\author[1]{Stefan Feuerriegel}
\affil[1]{%
	ETH Zurich\\
	Switzerland
}
\begin{document}
\maketitle

\begin{abstract}
Observational data is often used to estimate the effect of a treatment when randomized experiments are infeasible or costly. However, observational data often yields biased estimates of treatment effects, since treatment assignment can be confounded by unobserved variables. A remedy is offered by deconfounding methods that adjust for such unobserved confounders. In this paper, we develop the Sequential Deconfounder, a method that enables estimating individualized treatment effects over time in presence of unobserved confounders. This is the first deconfounding method that can be used with a \emph{single} treatment assigned at each timestep. The Sequential Deconfounder uses a novel Gaussian process latent variable model to infer substitutes for the unobserved confounders, which are then used in conjunction with an outcome model to estimate treatment effects over time. We prove that using our method yields unbiased estimates of individualized treatment responses over time. Using simulated and real medical data, we demonstrate the efficacy of our method in deconfounding the estimation of treatment responses over time.
\end{abstract}

\section{Introduction}
Individual-level decision-making is of increasing importance in domains such as marketing \citep{Brodersen2015a}, education \citep{mandel2014offline}, and medicine \citep{Hill2013}. For instance, in the medical domain, treatment decisions are based on the individual response of a patient to a treatment and must be adapted over time according to the diseases progress. Hence, physicians must know the treatment effect in order to decide how to treat a patient. Since randomized experiments are often costly or otherwise infeasible, several methods have been proposed to estimate treatment effect over time from observational data \citep[\eg,][]{lim2018forecasting, robins2009estimation}. These methods are largely based on the assumption that there are no unobserved confounders, \ie, variables that affect both the treatment assignment and the outcome. In practice, however, this assumption generally fails to hold, which yields biased estimates and, therefore, may invalidate any conclusions drawn from such methods.

We discuss how unobserved confounders introduce bias based on the following example. Consider a physician in a hospital that prescribes medications (\ie, treatments) to a patient, whereby the outcome is a measurement of a disease (\eg, a lab test). In this setting, there are many potential confounders, which affect both the treatment assignment and the patient outcome. For instance, treatment assignment is often confounded by social and behavioral data such as socio-economic status or housing conditions. Although recognized as important, social and behavioral data are routinely not captured in electronic health records (EHRs) \citep{adler2015patients,cantor2018integrating,chan2010electronic}. Since these confounders are unobserved, we cannot control for them when estimating the effect of a treatment. And, if not controlled for, these confounders introduce statistical dependence between treatment assignment and patient outcome, which yields biased estimates. Hence, methods are required that can estimate treatment effects in presence of such unobserved confounders.

In order to cope with unobserved confounders, a theory for deconfounding (\ie, adjusting for unobserved confounders) was developed in the static setting, which assumes that there are multiple treatments available \citep{wang2019blessings}. \citet{bica2019time} extended this theory to a specific sequential setting: They infer latent variables that act as substitutes for the unobserved confounders using a factor model. However, in order to infer latent variables, their theory requires multiple (\ie, \emph{two} or more) treatments at each timestep. As a consequence, this method \emph{cannot} be used in the common setting with a \emph{single} treatment at each timestep. In contrast to this, we develop a theory for deconfounding the estimation of treatment responses over time that does not rely on multiple treatments at each timestep; instead, our theory can be used in the common case in which there is a single treatment available at each timestep. Instead of leveraging the dependence between multiple treatments at each timestep, we leverage the sequential dependence between the treatments over time to infer substitutes for the unobserved confounders. To the best of our knowledge, this is the first work that exploits the sequential dependence between treatment assignment to enable unbiased estimation of treatment responses over time in presence of unobserved confounders.

In this paper, we propose the Sequential Deconfounder, a method for estimating treatment responses over time in presence of unobserved confounders. Our method proceeds in two steps: (i)~We fit a specific latent variable model that captures the sequential dependence between assigned treatments. By doing so, we obtain latent variables, which act as substitutes for the unobserved confounders. (ii)~We control for the substitutes and obtain estimates of the treatment responses via an outcome model. We prove that this yields unbiased estimates of individualized treatment responses over time. Informed by this theoretical result, we propose an instantiation of the Sequential Deconfounder based on a novel Gaussian process latent variable model. Finally, we demonstrate the efficacy of the Sequential Deconfounder using both simulated data, where we can control the amount of confounding, and real-world medical data.

We summarize our \textbf{contributions}\footnote{Code available at \url{https://github.com/tobhatt/SeqDeconf}} as follows:
\begin{enumerate}
	\item We develop a theory for deconfounding the estimation of treatment responses in presence of unobserved confounders when a single treatment is assigned sequentially, \ie, over time. 
	\item We propose, based on this theory, the Sequential Deconfounder, a method that infers substitutes for the unobserved confounders and yields unbiased estimates of individualized treatment responses over time. 
	\item We provide an instantiation of the Sequential Deconfounder via a novel Gaussian process latent variable model. The performance of state-of-the-art algorithms is substantially improved by our method.
\end{enumerate}

\section{Related Work}\label{sec:related_work}

Extensive work focuses on estimating treatment effects in the static setting \citep[\eg,][]{alaa2017bayesian, Johansson2016, Shalit2017a, Wager2018a}.
In contrast, our work focuses on estimating individualized treatment responses over time, \ie, we consider sequences of treatments. In the following, we discuss (i)~works on estimating treatment responses over time assuming no unobserved confounders and (ii)~works on deconfounding the estimation of treatment effects.

\textbf{(i)~Treatment responses over time.} Methods for estimating responses to a sequence of treatments originate primarily from the epidemiology literature. Among these methods are $g$-computation, structural nested models, and, in particular, marginal structural models \citep[\eg,][]{robins1986new, robins2009estimation, Robins2000a}. These methods have been extended using recurrent neural networks and adversarial balancing \citep{lim2018forecasting, bica2020estimating}. In order to incorporate uncertainty quantification, Gaussian processes have been tailored to the estimation of treatment responses in the continuous-time setting \citep{schulam2017reliable, soleimani2017treatment}. The aforementioned methods have found widespread use; however, without exception, all of these methods rely on the assumption that there are no unobserved confounders. In practice, this assumption rarely holds true such that estimates obtained from these methods can be biased. 

Our work addresses this shortcoming by developing a method for deconfounding, which can be used in conjunction with the above approaches and can lead to unbiased estimates of treatment responses over time.

\textbf{(ii)~Deconfounding the estimation of treatment effects.} Deconfounding uses latent variable models to adjust for unobserved confounders. One possible approach is to rely proxies of the unobserved confounder to infer latent variables that are used for adjustment \citep{Louizos2017, lu2018deconfounding, witty2020causal}. However, proxies may often not be available in practice. Instead, there have been several attempts to infer unobserved confounders in the static setting only from the treatment assignment \citep[\eg,][]{tran2017implicit, wang2019blessings, wang2018deconfounded, zhang2019medical}. These methods estimate treatment effects in the static setting and assume that there are multiple treatments assigned. In particular, \citet{wang2019blessings} leverages the dependence between multiple treatments to infer substitutes for the unobserved confounders using a factor model. The only approach in the sequential setting is an extension of \citet{wang2019blessings} to a specific sequential setting, that is, when there are multiple treatments available at each timestep \citep{bica2019time}. Similar to \citet{wang2019blessings}, multiple treatments at each timestep are assumed to infer latent variables via a factor model. This factor model leverages the dependence between the multiple treatments at each timestep in order to infer substitutes for the unobserved confounders at each timestep. Both theory and methods \emph{entirely} rely on the assumption of multiple treatments and, hence, none of the above methods can be applied in the common sequential setting with a \emph{single} treatment assigned at each timestep.

As a remedy, we develop a theory for sequential deconfounding which can be applied in the common sequential setting with a single treatment at each timestep.

\section{Problem Setup}\label{sec:problem_formulation}
In this section, we introduce the setup and notation used to study treatment responses and formalize the problem of unobserved confounders. For this, we consider a patient $i$, for which the random variables ${X}_t^{(i)} \in \cl{X}_t$ are the observed covariates (\eg, blood pressure), a single ${A}^{(i)}_t \in \cl{A}_t$ assigned at time $t$, and ${Y}^{(i)}_{t+1} \in \cl{Y}_t$ are the observed outcomes. Static covariates (\eg, genetic information) are part of the observed covariates. Note that we do not assume multiple treatments at each timestep. We consider the common case in which there is a single treatment available at each timestep.

Observational data on patient trajectories consists of $N$ independent realizations of the above variables for $t=1$ until $T^{(i)}$, \ie, $\cl{D}~=~(\{{x}^{(i)}_{t},  {a}^{(i)}_{t} , {y}^{(i)}_{t+1}\}_{t=1}^{T^{(i)}})_{i=1}^N$. For simplicity, we omit the patient superscript $(i)$.

We introduce further notation as follows. Let $\hist{{X}}_{t} = ({X}_1, \dots, {X}_t)  \in \hist{\cl{X}}_t$ denote the history of covariates and let $\hist{{A}}_{t} = ({A}_1, \dots, {A}_t)\in \hist{\cl{A}}_t$ denote the treatment history up to timestep $t$. Realizations of these random variables are denoted by $\hist{{x}}_t$ and $\hist{{a}}_t$, respectively.

We build upon the potential outcomes framework \citep{rubin1978bayesian, Rubin2005}, which was extended to take into account sequences of treatments \citep{robins2009estimation}. Let ${Y}(\hist{{a}}_t)$ be the potential outcome for the treatment history $\hist{{a}}_t$. If $\hist{{a}}_t$ coincides with the treatment history in the data, then the outcome is observed. Otherwise, the outcome is not observed. For each patient, we aim to estimate the individualized treatment response, \ie, the outcome conditional on patient covariate history: $\mathbb{E}[{Y}(\hist{{a}}_t) \mid  \hist{X}_t = \hist{x}_t]$. 

For this, we make two standard assumptions \citep{robins2009estimation, schulam2017reliable}:
\begin{assumption}(\textit{Consistency.})  \textit{If} $\hist{{A}}_t = \hist{{a}}_t$ \textit{for a given patient, then} ${Y}(\hist{{a}}_t) = {Y}$ \textit{for that patient.}
\end{assumption}
\begin{assumption}(\textit{Positivity.}) 
\textit{If} $\mathbb{P}( \hist{A}_{t-1} = \hist{a}_{t-1}, \hist{X}_t = \hist{x}_t)\neq 0$ \textit{then} $\mathbb{P}(\hist{A}_t = \hist{a}_t \mid \hist{A}_{t-1} = \hist{a}_{t-1}, \hist{X}_t = \hist{x}_t) > 0$ \textit{for all} $\hist{a}_t \in \widebar{\mathcal{A}}_t$.
\end{assumption}
When using observational data, we can only obtain estimates of $\mathbb{E} [{Y} \mid \hist{{A}}_t = \hist{{a}}_t, \hist{X}_t = \hist{x}_t]$, but not for $\mathbb{E}[{Y}(\hist{{a}}_t) \mid  \hist{X}_t = \hist{x}_t]$. However, many existing methods assume that these quantities are equal by assuming \textit{sequential strong ignorability}: 
\begin{equation}
	{Y}(\hist{a}_t) \ci  A_t \mid \hist{A}_{t-1}, \hist{X}_t,
\end{equation}
for all $\hist{a}_t\in \histcl{A}_t$ and for all $t = 1, \dots, T$. Sequential strong ignorability holds if there are no unobserved confounders $ U_t$. If this condition held true, then $\mathbb{E}[{Y}(\hist{{a}}_t) \mid  \hist{X}_t = \hist{x}_t]$ could be estimated from observational data, since $\mathbb{E}[{Y}(\hist{{a}}_t) \mid  \hist{X}_t = \hist{x}_t] = \mathbb{E} [{Y} \mid \hist{{A}}_t = \hist{{a}}_t, \hist{X}_t = \hist{x}_t]$.
\begin{figure}[t] 
	\centering
	\scalebox{0.6}{\includegraphics{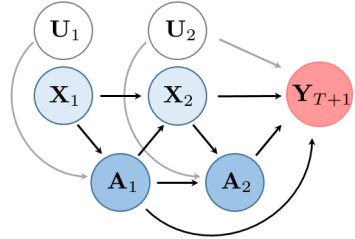}}
	\vspace{-0.8em}
	\caption{An illustration of the setup. A patient with covariates $ X_t$ is assigned a treatment $ A_t$. In practice, the assigned treatment and the patient's outcome are often confounded by an unobserved confounder $ U_t$, which prohibits the use of standard methods for estimating treatment effects over time. The variables can be connected to the outcome in an arbitrary way. \label{fig:Setup}}\vspace{-1.3em}
\end{figure}
In practice, however, sequential strong ignorability may fail to hold, since there exist unobserved confounders $ U_t$. 

Hence, individualized treatment responses cannot be estimated from observational data using standard methods, since they estimate $\mathbb{E} [{Y} \mid \hist{{A}}_t = \hist{{a}}_t, \hist{X}_t = \hist{x}_t]$, but not $\mathbb{E} [{Y}(\hist{{a}}_t) \mid \hist{X}_t = \hist{x}_t]$, which are in general not equal. Moreover, we cannot test whether sequential strong ignorability holds in practice, since we never observe all potential outcomes. Instead, we only observed the outcome associated to the assigned treatment history, which is not sufficient to test the conditional independence. \Cref{fig:Setup} illustrates the general setup.

In this paper, we address this problem and show how to estimate individualized treatment effects over time in presence of unobserved confounders.

\section{Sequential Deconfounder}\label{sec:SeqDeconf}
We introduce the Sequential Deconfounder, a method that enables the estimation of treatment responses in presence of unobserved confounders when treatments are assigned sequentially, \ie, over time. The Sequential Deconfounder proceeds in two steps: (1)~We deconfound the data. For this, we leverage the sequential dependence between assigned treatments to infer latent variables which act as substitutes for the unobserved confounders. (2)~We estimate an outcome model using the observational data augmented with the substitutes. We prove that this yields unbiased estimates of individualized treatment responses over time. 

\subsection{Step 1: Deconfounding}\label{sec:deconfounding}
Our approach for sequential deconfounding is based on the following idea. If there are unobserved confounders $ U_t$, they introduce dependence between the treatments assigned over time. Hence, we can use the probabilistic law according to which the assigned treatments in the observed data changes over time to infer latent variables ${Z}_t \in \cl{Z}_t$. These latent variables then act as substitutes for the unobserved confounders. We prove that there cannot exist unobserved confounders that are not captured by ${Z}_t$ and, thus, conditioning on ${Z}_t$ yields an unbiased estimate of the individualized treatment response.

\subsubsection{Conditional Markov Model}\label{sec:CMM}
We introduce the following conditional Markov model~(CMM). The CMM determines the probabilistic laws according to which the assigned treatments in the observed data changes over time. Conditional on the observed covariates $ X_t$ and the latent variable ${Z}_t$, the CMM renders the treatment assignment as a Markov process.

\begin{definition} \textit{(Conditional Markov model for sequential treatments.)}
\textit{The assigned treatments follows a conditional Markov model if the distribution can be written as}
\begin{equation}
p(\hist{a}\mid \hist{x}, \hist{z}, \theta) {} = p( a_1\mid {z}_1, {x}_1, \theta)\prod_{t=2}^{T} p( a_{t} \mid  a_{t-1}, {z}_t, {x}_t, \theta), 
\end{equation}
\textit{where $\theta$ are parameters.} 
\end{definition}
Notice that we do not assume that, in the observational data, the patient covariates $x_t$ at timestep $t$ are independent of the patient history. We can leverage the sequential dependence between assigned treatments over time in the CMM to infer a latent variable ${Z}_t$ which, conditional on the covariates ${X}_t$, renders the treatment assignment Markovian. 

Markovianity is realistic in practice, since clinicians typically determine the course of treatment considering not the entire treatment history, but the current treatment. As such, some of the most established methods for modeling decision-making in medicine are based on Markov processes \citep[\eg,][]{bennett2013artificial,komorowski2018artificial, steimle2017markov, tsoukalas2015data}.

\subsubsection{Theoretical Results}\label{sec:theory}
In this section, we prove that, when the assigned treatments follow a CMM, the latent variables ${Z}_t$ act as valid substitutes for the unobserved confounders. In particular, we show that there cannot exist any other unobserved confounder ${U}_t$ that is not captured by ${Z}_t$. If there existed another unobserved confounder ${U}_t$, then the conditional Markov property of the treatment assignment would not be satisfied anymore. 

This argument does not apply to time-varying unobserved confounders, \ie, unobserved confounders that change over time and affect one assigned treatment at each timestep. Hence, we require time-invariant unobserved confounding.\footnote{In epidemiology, such confounders are often called \textit{missing baseline measurements} \citep{white2005adjusting}, whereas they are called \textit{time-invariant unobserved heterogeneity} in economics \citep{plumper2007efficient}.}
\begin{assumption}\label{assum:time_invariant}
\textit{(Time-Invariant Unobserved Confounding.)}
\textit{All unobserved confounders} ${U}_t$\textit{ are time-invariant, \ie, }$ U_t =  U, \text{ for all } t\in\{1,\ldots, T\}$. 
\end{assumption}
Assumption~\labelcref{assum:time_invariant} requires the unobserved confounder to be the same random variable at each timestep, but potentially different for each patient. 

Time-invariant unobserved confounding is realistic in many domains. For instance, in the medical domain, social data such as socio-economic status or housing condition are, over the course of treatment, time-invariant. Although these variables are potential confounders for treatment decisions and explain potential disparities associated with the outcome of interest, social data is not routinely captured in electronic health records (EHRs) \citep{cantor2018integrating,chan2010electronic}. Moreover, social determinants of health that could be imported from data sources such as social services organizations are usually missing in EHRs due to the lack of interoperability \citep{adler2015patients}. Our Sequential Deconfounder is suitable for capturing such unobserved confounders. As a direct consequence of Assumption~\labelcref{assum:time_invariant}, the latent variable $Z_t$ is also time-invariant.

We now present the connection between the CMM and sequential strong ignorability, which justifies the use of our method.
\begin{theorem}\label{th:sequential_strong_ignorability}
    \textit{If the treatment assignment can be written as a CMM, we obtain sequential strong ignorability, conditional on the substitute and the covariates, \ie}
\begin{equation}
{Y}(\hist{{a}}_t)  \ci  {A}_{t} \mid \hist{A}_{t-1},  \hist{X}_{t}, {Z}, 
\end{equation}
\textit{for all $\hist{a}_t\in \histcl{A}_t$ and for all $t\in \{1, \dots, T\}$.} 
\end{theorem}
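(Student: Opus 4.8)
The plan is to argue by contradiction, transporting the ``blessings of multiple causes'' logic to the sequential setting, where the dependence of the treatments \emph{over time} plays the role that the dependence between simultaneous causes plays in the static theory. First I would read off from the definition of the CMM the conditional independence it encodes. Because the factorization is that of a first-order Markov chain conditional on $(\hist{X}_t, Z)$ — with $Z$ time-invariant by Assumption~\labelcref{assum:time_invariant} — it immediately yields
\begin{equation}
A_t \ci \hist{A}_{t-2} \mid A_{t-1}, \hist{X}_t, Z
\end{equation}
for every $t \ge 3$. This conditional Markov property is the only structural consequence of the CMM I will use, and obtaining it is a routine read-off from the product form.

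Next I would suppose, toward a contradiction, that $Z$ is \emph{not} a valid substitute, i.e.\ that there exists an unobserved confounder $U$ — a common cause of the treatments and the potential outcomes — that conditioning on $Z$ fails to account for. The decisive use of Assumption~\labelcref{assum:time_invariant} enters here: since $U$ is time-invariant, it is a common cause not of a single treatment but of the treatments at \emph{every} timestep $A_1, \dots, A_T$. This is exactly the device that removes the need for multiple treatments per timestep: a single treatment at each step suffices because the same confounder reappears across steps, so that its influence is spread over the whole treatment sequence and cannot hide behind a single assignment.

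I would then derive the contradiction by a d-separation argument on the treatment side. Assuming at least three timesteps, consider the fork $A_3 \leftarrow U \to A_1$, a path whose only intermediate node is the non-collider $U$; the adjacent treatment $A_2$ does not lie on it. Since $U$ is unobserved and, by the contradiction hypothesis, not captured by $Z$, conditioning on $\{A_2, \hist{X}_3, Z\}$ leaves this fork open, so that $A_3$ depends on $A_1$ given $\{A_2, \hist{X}_3, Z\}$ — contradicting the conditional Markov property above. Hence no unobserved confounder escapes $Z$; every common cause of treatment and outcome is captured by $Z$. Conditioning on $(\hist{A}_{t-1}, \hist{X}_t, Z)$ therefore blocks every back-door path between $A_t$ and $Y(\hist{a}_t)$ at each $t$, which is precisely the asserted sequential ignorability.

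I expect the main obstacle to be the bridge from ``$Z$ captures everything that makes the treatments non-Markov'' to ``$Z$ captures everything that confounds the \emph{outcome}.'' One has to make ``captured by $Z$'' precise — as conditional independence of $U$ from the potential outcomes given $Z$ and the observed covariates — and then argue that any such residual confounder necessarily manifests as extra, long-range dependence among the treatments, so that the clean fork $A_3 \leftarrow U \to A_1$ genuinely survives conditioning on the adjacent treatment without being re-blocked or re-opened by collider effects. Ruling out the degenerate possibility that a confounder touches only adjacent timesteps — in which case it would be compatible with a first-order chain — is the subtle point, and it is resolved precisely by time-invariance, which forces the confounder to act at all timesteps at once.
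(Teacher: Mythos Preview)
Your proposal captures exactly the intuition of the paper's proof sketch and Figure~2: a time-invariant confounder not absorbed by $Z$ would create long-range dependence among treatments, contradicting the first-order Markov property encoded in the CMM. The obstacle you flag --- bridging ``$Z$ captures what makes treatments non-Markov'' to ``$Z$ captures what confounds the outcome'' --- is precisely the crux, and your resolution via time-invariance is the right one.

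Where the paper's full appendix proof differs is in the technical vehicle. Rather than running a direct d-separation argument on the fork $A_3 \leftarrow U \to A_1$, the paper introduces an intermediate notion of \emph{recursive construction}: by Kallenberg's representation of Markov chains (Proposition~7.6), the CMM yields $A_t = f_t(A_{t-1}, Z, X_t, V_t)$ with $V_t \overset{\text{iid}}{\sim} \text{Uniform}[0,1]$. The proof then splits into two lemmas: (i)~recursive construction implies sequential ignorability (because $A_t$ is a measurable function of variables conditionally independent of $Y(\hist a_t)$), and (ii)~CMM implies recursive construction, where the key step $V_t \ci Y(\hist a_t) \mid \hist A_{t-1}, \hist X_t, Z$ is argued by the same time-invariant/time-varying case distinction you anticipate --- a time-invariant confounder of $V_t$ would also confound $V_s$ for $s\neq t$, contradicting their iid-ness.

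The trade-off: your d-separation route is more elementary and transparently graphical, but it implicitly leans on faithfulness (an open path must yield actual dependence) and requires you to fix a DAG. The paper's route avoids faithfulness by pushing the contradiction onto the iid structure of the constructed noise variables $V_t$, at the cost of invoking Kallenberg's measure-theoretic machinery. Both arguments share the same informal step --- that a residual confounder of the outcome must also confound the treatment-side noise across timesteps --- so neither is dramatically more rigorous than the other on that point.
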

\textit{Sketch of Proof.} The Markov process in the CMM requires that treatment $a_t$ only depends on the previous treatment $a_{t-1}$, given covariates ${x}_t$ and latent variable ${z}$. If there was a time-invariant unobserved confounder ${u}$ that is not captured by ${z}$, then this would introduce dependence between assigned treatments beyond the previous treatment $a_{t-1}$. This would contradict the Markov property of the treatment assignment. See \Cref{apx:proof_th1} for a full proof. \qed
\begin{figure}[t] 
	\centering
	\scalebox{0.44}{\includegraphics{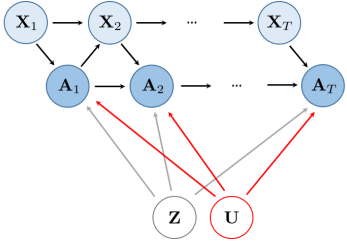}}
	\vspace{-0.8em}
	\caption{An illustration of the proof of \Cref{th:sequential_strong_ignorability}. If there existed an unobserved time-invariant confounder $U$ that is not captured by $Z$, it would introduce dependence between the assigned treatments beyond the previous treatment (depicted by the red arrows), which contradicts the Markov property imposed by the CMM.\label{fig:sketch}}\vspace{-1.3em}
\end{figure}

\Cref{th:sequential_strong_ignorability} has several implications. First, using the latent variable $Z$ inferred by the CMM as a substitute for the unobserved confounder yields unbiased estimates of individualized treatment responses. Second, it shows how to find a valid substitute for the unobserved confounder that renders the treatments subject to sequential strong ignorability. A valid substitute has to satisfy two sufficient conditions: (i)~the substitute comes from a CMM; (ii)~the CMM captures the distribution of the assigned treatments, which both can be tested. In sum, \Cref{th:sequential_strong_ignorability} confirms that the Sequential Deconfounder enables unbiased estimation of individualized treatment responses over time when the treatment assignment can be written as a CMM.\footnote{The static deconfounder framework introduced by \citet{wang2019blessings} initiated several discussions regarding identification of the latent variable model \citep[\eg,][]{d2019multi}, which we discuss for our setting in detail in \Cref{apx:identifiability}.}



In order to assess the quality of the fitted CMM in the Sequential Deconfounder, we can use predictive model checks as introduced in previous work \citep[\eg,][]{bica2019time, rubin1984bayesianly}. We describe predictive model checks in \Cref{apx:predictive_checks}.

\subsection{Step 2: Outcome Model}
Once we obtained an estimate $\hat{ Z}$ for the substitute, we use it to estimate $\Eb{{Y}\mid \hist{A}_t=\hist{a}_t, \hist{X}_t = \hist{x}_t, Z = \hat{z}}$ via an outcome model \citep[\eg,][]{bica2020estimating,lim2018forecasting, robins2009estimation}. While any outcome model can be used, we use established outcome models for estimating individualized treatment responses over time as discussed in \Cref{sec::outcome_model}.

\subsection{``Time-Invariant Unobserved Confounders'' vs. ``No Unobserved Confounders''} We compare the assumption of the Sequential Deconfounder to the ``no unobserved confounders'' assumption. Most existing methods for estimating treatment responses over time assume that there are no unobserved confounders at all \citep[\eg,][]{lim2018forecasting, robins1986new}. In practice, however, this assumption rarely holds, since many confounders are not subject to data collection or cannot be easily measured. Our work makes \emph{strictly weaker} assumptions: We assume that there are only time-invariant unobserved confounders. This is relevant in practice. For instance, social data such as socio-economic status or housing conditions are, over the course of treatment, time-invariant. Yet, besides being confounders, they are not recorded in EHRs \citep{chan2010electronic}. 

\section{Instantiation of the Sequential Deconfounder} \label{sec:gauss_markov_model_implementation}
In this section, we propose an instantiation of the Sequential Deconfounder in order to infer the substitute $Z$. Note that our theory holds true for any model that infers the latent variable ${Z}$ in the CMM: it does not restrict the class of models that can be used. However, factor models \citep{bica2019time, wang2019blessings} are prohibited in the sequential setting, since they render the assigned treatments independent of each other and, hence, cannot capture any sequential dependence between the assigned treatments. As a remedy, we develop a sequential Gaussian process latent variable model (SeqGPLVM) that captures the sequential dependencies between assigned treatment by leveraging the underlying structure of the CMM.

\subsection{The SeqGPLVM}
\begin{figure}[t] 
	\centering
	\scalebox{0.4}{\includegraphics{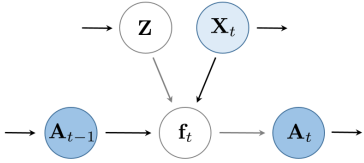}}
	\vspace{-0.6em}
	\caption{An Illustration of the SeqGPLVM. The current treatment $ A_t$ is determined by the covariates $ X_t$, the latent variable $ Z$, and the previous treatment $ A_{t-1}$. The function ${f}_t$ is modeled by a Gaussian process.\label{fig:cmm_implementation}}\vspace{-1.3em}
\end{figure}
In the context of latent variable models, the Gaussian process latent variable model (GPLVM) is a prominent example for probabilistic learning of latent variables \citep[\eg,][]{lawrence2004gaussian, titsias2010bayesian}. However, standard GPLVMs are typically not applicable to the Sequential Deconfounder, since they were developed for the static setting and, hence, cannot capture sequential dependencies between treatment assignment over time. Tailored GPLVMs were proposed that allow some of the covariates to be observed and others to be latent \citep{martens2018decomposing}, but again in the static setting. As a remedy, we introduce a novel SeqGPLVM in the following.

Given the covariates and treatment history, $\hist{x}_t$ and $\hist{a}_t$, our objective is to infer the posterior distribution of the substitute, $p({z}\mid \hist{a}_t, \hist{x}_t)$, over latent coordinates whilst maintaining the latent structure imposed by the CMM. To this end, we place a prior over the substitute, $p({Z}) = \cl{N}({z}\mid 0, \sigma^2 I)$ and define the forward mappings ${f}_t: \cl Z\times\cl X_t\times\cl A_{t-1} \rightarrow \cl A_t$, \ie, the mapping from the joint space representing the substitute, the observed covariates, and the previous treatment assignment to the space of the current treatment assignment. We choose Gaussian processes (GPs) as a non-parametric model for the forward mappings, \ie, ${f}_t \sim \cl{GP}(0, K)$. This choice reflects the strong theoretical underpinnings of GPs and advances that enabled such models to be scalable to large datasets \citep{hensman2013gaussian}. As a result, we obtain the following generative model, which maintains the structure of the CMM:
\begin{align}
	p(\hist{a}_t\mid \hist{f}_t, \hist{x}_t, &  z, \theta) = p( a_1\mid {f}_1)\,p({f}_1\mid {x}_{1}, {z}, \theta)\notag\\
	&\times\prod_{s=2}^{t} p( a_s\mid {f}_s)\,p({f}_s\mid {a}_{s-1}, {x}_{s}, {z}, \theta),
\end{align}
where ${f}_s\sim\cl{GP}(0, K)$. The generative model is depicted in \Cref{fig:cmm_implementation}. Different forms of interactions between ${z}$, ${x_t}$, and ${a}_{t-1}$ can be encoded via different kernels. We use a treatment-adjusted kernel on the joint space $\cl{Z} \times \cl{X}_t\times \cl{A}_{t-1}$, which is based on the popular squared exponential kernel:
\begin{align}
	k^{\text{treat}}&(({z}, {x}, {a}), ({z'}, {x'}, {a}')) = -\frac12 \sigma_{zxa}^2\Bigg[ \sm j {\vert\cl{Z}\vert} \left(\frac{z_j-z_j'}{l_j^{(z)}}\right)^2\nonumber\\ 
	&+ \sm j {\vert\cl{X}_t\vert} \left(\frac{x_j-x_j'}{l_j^{(x)}}\right)^2 + \sm j {\vert\cl{A}_{t-1}\vert} \left(\frac{a_j-a_j'}{l_j^{(a)}}\right)^2 \Bigg],
\end{align}
where $\sigma_{zxa}^2$ is the kernel variance parameter and $l_j$ are the lengthscales on the different spaces. This kernel allows to capture interactions between $x_t$ and $z_t$ while adjusting for the previously assigned treatment ${a}_{t-1}$, but not for any assigned treatments further in the past as required by the CMM. Finally, we make use of Bayesian shrinkage priors on the kernel variance $\sigma_{zxa}^2$ to encourage unnecessary components to be shrunk to zero, \ie, we specify $\sigma_{zxa}^2 \sim \Gamma(1,1)$.

\subsection{Inference for SeqGPLVM}
To make inference, we adopt variational inducing point based inference \citep{titsias2010bayesian} to the SeqGPLVM in a straightforward manner. Recall that the treatment-adjusted kernel is defined on the product space $\cl{Z} \times\cl{X}_t \times \cl{A}_{t-1}$, and hence the inducing points lie in this space which has dimensionality  $\vert\cl{Z}\vert + \vert\cl{X}_t\vert + \vert\cl{A}_{t-1}\vert$. The emission likelihood $p( a_{t} \mid {f}_t)$ determines the marginal likelihood and has to be chosen depending on the space $\cl{A}_t$. For instance, if $\cl{A}_t =\{0,1\}$, one might choose a Bernoulli likelihood. If treatments are continuous, the emission likelihood could be Gaussian. In the latter case, one can analytically integrate out the GP mappings and give a closed-form solution for the marginal likelihood.

\section{Experiments on Simulated Data}\label{sec:simulated_data}
The medical domain is known to be prone to many unobserved confounders affecting both treatment assignment and outcomes \citep{gottesman2019guidelines}. As such, estimating individualized treatment responses requires adaptation to the sequential setting with unobserved confounders. We simulate a medical environment which offers the advantage that we have access to ground truth individualized treatment responses. Further, we can vary the amount of unobserved confounding in order to validate our theory empirically similar to \citet{bica2019time, Louizos2017, wang2019blessings}.

\subsection{Data Simulation}\label{sec::data_simulation}
We simulate observational data $\cl{D}~=~(\{{x}^{(i)}_{t},  {a}^{(i)}_{t} , {y}^{(i)}_{t+1}\}_{t=1}^{T^{(i)}})_{i=1}^N$ from a medical environment similar to previous works \citep{bica2019time}. This environment consists of a patient's (observed) medical state $ X_t$ at each timestep, which changes according to the treatment history and past medical states by
\begin{equation}
X_{t, j} = \frac{1}{p}  \sum_{i=1}^p \left( \alpha_{i, j} X_{t-i, j} + \omega_{i} A_{t-i} \right)+ \eta_t,
\end{equation}
with weights $\alpha_{i, j} \sim  \cl{N}(0, 0.5^2)$, $\omega_{i} \sim \cl{N}(1 - (i/p), (1/p)^2)$ and noise $\eta_t \sim \cl{N}(0, 0.01^2)$. The initial medical state is given by $X_{0, j} \sim \mathcal N(0,0.1^2)$. Suppose there exists a time-invariant unobserved confounder $U$, which is modeled by 
\begin{equation}
U \sim \mathcal N(0,0.1^2).
\end{equation}
This confounder is not included in the data and, thus, is an unobserved confounder. The treatment assignments depends on the medical state $ X_{t}$, the unobserved confounder $U$, and the previous treatment $A_{t-1}$:
\begin{eqnarray}
{\pi}_{t} &=& \gamma_A  U + (1-\gamma_A) (\hat{X}_{t} + A_{t-1}),\\
A_{t} \mid {\pi}_{t}  &\sim& \text{Bernoulli}(\sigma(\lambda\pi_{t})), \,\, 
\end{eqnarray}
where $\hat{X}_{t}$ is the mean of the covariates, $\lambda = 15$, $\sigma(\cdot)$ is the sigmoid function, and $\gamma_A \in [0,1]$ is a parameter which controls the amount of confounding applied to the treatment assignment. No unobserved confounding corresponds to $\gamma_A = 0$. The outcome is a function of unobserved confounder and medical state:
\begin{equation}\label{eq:sim_outcome}
Y_{t+1} =\gamma_Y U + (1-\gamma_Y)\hat{X}_{t+1},
\end{equation}
where, as before, $\gamma_Y \in [0,1]$ controls the amount of unobserved confounding applied to the outcome. We simulate 5,000 patients trajectories from the medical environment with 20 to 30 timesteps. The dimension and time-dependence of the covariates is set to $k=3$ and $p=3$. Each dataset is split 80/20 for training and testing, respectively.

\subsection{Outcome Model}\label{sec::outcome_model}
We evaluate the effectiveness of our Sequential Deconfounder to adjust for unobserved confounders when used \emph{in~conjunction} with an outcome model for estimating treatment responses. We focus on three established outcome model: \textbf{(a)}~Marginal Structural Models \citep[\eg,][]{robins2009estimation, Robins2000a} from the epidemiology literature, \textbf{(b)}~Recurrent Marginal Structural Networks \citep{lim2018forecasting}, and \textbf{(c)}~Counterfactual Recurrent Networks \citep{bica2020estimating} from the machine learning literature. Note that any outcome model can be used in conjunction with our Sequential Deconfounder.



\subsection{Results}
\begin{figure*}[t] 
	\centering
	\scalebox{0.75}{\includegraphics{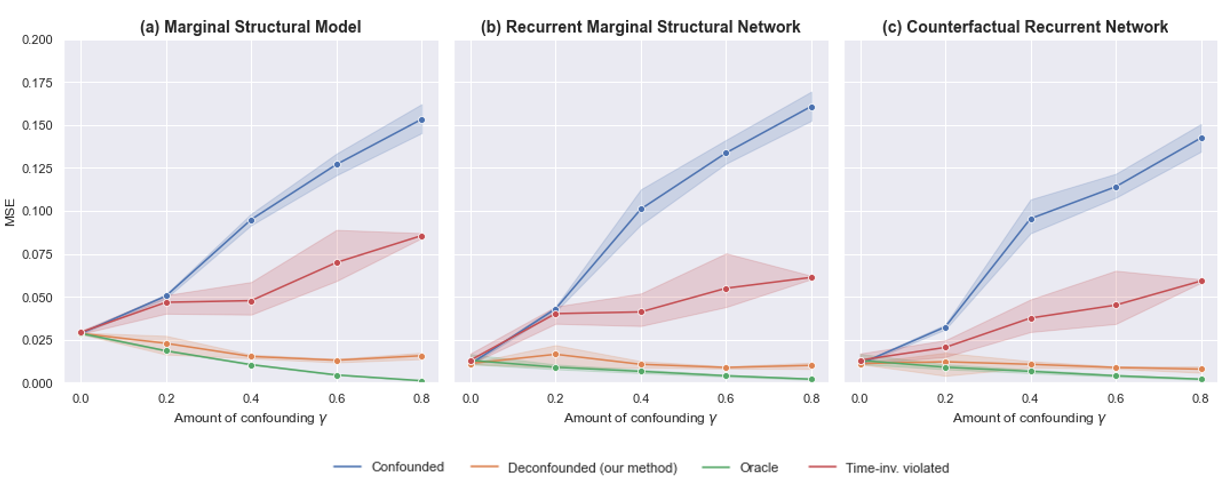}}
	\vspace{-0.75em}
	\caption{Mean squared error~(MSE) and standard deviation for one-step ahead prediction of treatment responses for varying amount of unobserved confounding. We use (a)~Marginal Structural models, (b)~Recurrent Marginal Structural Networks, and (c)~Counterfactual Recurrent Networks as outcome models. Lower is better. \label{fig:sim_results}}\vspace{-1.3em}
\end{figure*}
In this section, we use the Sequential Deconfounder to obtain unbiased estimates of one-step  ahead treatment responses. For a comparative evaluation, we investigate three different scenarios in which the outcome models are trained to estimate treatment responses: \textbf{(i)}~without information about the unobserved confounder ${U}$~(``Confounded''), \textbf{(ii)}~with the true unobserved confounder ${U}$~(``Oracle''), and \textbf{(iii)}~with the substitute $\hat{{Z}}$ obtained by the Sequential Deconfounder~(``Deconfounded''). Note that we cannot compare to any other deconfounder methods such as \citet{bica2019time}, since our work is the first deconfounder method applicable in a setting in which there is only a \emph{single} treatment available at each timestep. Hence, methods such as the one of \citet{bica2019time}, which require multiple treatments assigned at each timestep in order for both their theory and factor model to work, are not applicable.

In \Cref{fig:sim_results}, we present the mean squared error~(MSE) for one-step ahead estimation of individualized treatment responses for varying amount of unobserved confounding (\ie, varying the parameter $\gamma = \gamma_A = \gamma_Y$). We find that the Sequential Deconfounder, when used in conjunction with outcome models, improves the MSE substantially. This shows the strong benefit of our Sequential Deconfounder. In particular, the Sequential Deconfounder achieves unbiased estimates of treatment responses, \ie, its estimates are close to the estimates obtained by the oracle approach (which has access to the true unobserved confounder). In absence of unobserved confounding (\ie, $\gamma=0$), using the Sequential Deconfounder does not decrease the performance.

\textbf{Violation of ``Time-Invariant Unobserved Confounders''.} We investigate the sensitivity of the Sequential Deconfounder with respect to violation of Assumption~\labelcref{assum:time_invariant}, \ie, when there are not only time-invariant unobserved confounders. For this, remove the time-varying covariate $X_{t,1}$ from the dataset. This makes $X_{t,1}$ a time-varying unobserved confounder, which violates Assumption~\labelcref{assum:time_invariant}. We then train the Sequential Deconfounder. We denote this scenario as ``Time-inv. violated''. In \Cref{fig:sim_results}, we observe that this yields biased estimates of the treatment responses. However, the performance of the Sequential Deconfounder remains superior to the performance when there is no control over unobserved confounders at all.  

\textbf{Predictive checks.} Similar to previous works, we assess whether the Sequential Deconfounder captures the distribution of assigned treatments by performing predictive model checks \citep{bica2019time, wang2019blessings}. We observe that the $p$-values are close to the optimal value of $0.5$, which means that our method captures the distribution of the assigned treatment. Details are given in \Cref{apx:predictive_checks_simulated_data}. 

In presence of unobserved confounders, methods that rely on the assumption of no unobserved confounders yield biased estimates of the individualized treatment response, since $\mathbb{E} [{Y}(\hist{{a}}_t) \mid \hist{X}_t = \hist{x}_t] \neq \mathbb{E} [{Y} \mid \hist{{A}}_t = \hist{{a}}_t, \hist{X}_t = \hist{x}_t]$. In contrast, by inferring a substitute and augmenting the data with it, the Sequential Deconfounder provides \mbox{unbiased} estimates of individualized treatment responses. As a result, our Sequential Deconfounder, when used in conjunction with state-of-the-art methods, improves performance substantially.

\section{Experiments on Real-World Data}\label{sec:real_data}
We apply the Sequential Deconfounder to a real-world medical setting using the Medical Information Mart for Intensive Care (MIMIC-III) database \citep{johnson2016mimic}. The database consists of EHRs from patients in the intensive care unit. We extract $3,487$ patients with trajectories up to $30$ timesteps and $25$ patient covariates such as vital signals and lab tests together with static covariates such as gender. We estimate the individualized treatment response on two datasets. In particular, we estimate the individualized treatment response of: (1)~vassopressors and (2)~mechanical ventilators over time on three patient outcomes: white blood cell count, blood pressure, and oxygen saturation. 

We use the Sequential Deconfounder in conjunction with the same outcome models as in \Cref{sec:simulated_data}. In \Cref{tbl:mimic_3_results}, we illustrate the MSE for one-step ahead estimation of vassopressors. The results for mechanical ventilators can be found in \Cref{apx:results_real_world_data}. Note that we do not have access to the true treatment responses, since MIMIC-III is a real-world medical dataset. However, compared to the standalone outcome models (``Conf.''), using our Sequential Deconfounder and augmenting the data with the substitutes for the unobserved confounders (``Deconf.'') yields substantially lower MSE for the response to vassopressors. This is because the Sequential Deconfounder uses the sequential dependence between assigned treatments to infer latent variables which account for unobserved variables. By using these latent variables as substitutes, we improve the estimation of individualized treatment responses. While these results on real-world data require validation from clinicians, they indicate the potential benefit of the Sequential Deconfounder in real medical settings.
{\renewcommand{\arraystretch}{.725}
	\begin{table}
		\caption{\footnotesize\label{tbl:mimic_3_results} Results (MSE) for estimating responses to \emph{vasopressors} on MIMIC-III. Each outcome model is trained without information about the unobserved confounder $U$ (Conf.) and with the substitute $\hat{{Z}}$ obtained by Sequential Deconfounder (Deconf.) for 10 runs. Lower is better.}
		\vspace{-1.5em}
		\begin{center}
			\begin{sc}
				\fontsize{15.5}{15.5}\selectfont
				 	\resizebox{\columnwidth}{!}{
					\begin{tabular}{llccc}
		\toprule\addlinespace[0.75ex]
		&&\multicolumn{3}{c}{\bf{MIMIC-III (Mean $\pm$ Std)}}\\
		\cmidrule{3-5}\addlinespace[0.75ex]
		&& \makecell[l]{White blood\\cell count} &\makecell[l]{Blood\\pressure} &\makecell[l]{Oxygen\\saturation} \\
		\hline
		\addlinespace[0.75ex]
		\makecell[l]{\textbf{MSM}} &Conf.& $\textup{.871} \pm \textup{.00}$ &   $\textup{.171} \pm \textup{.00}$& $\textup{.497} \pm \textup{.00}$ \\
		\makecell[l]{} &Deconf.& $\textbf{.709} \pm \textbf{.01}$ & $\textbf{.165} \pm \textbf{.00}$ & $\textbf{.337} \pm \textbf{.01}$ \\
		\hline
		\addlinespace[0.75ex]
		\makecell[l]{\textbf{RMSN}} &Conf.& $\textup{.677} \pm \textup{.03}$ & $\textup{.135} \pm \textup{.01}$  & $\textup{.401} \pm \textup{.02}$\\
		\makecell[l]{} &Deconf.&$\textbf{.532} \pm \textbf{.04}$ & $\textbf{.101} \pm \textbf{.01}$ & $\textbf{.249} \pm \textbf{.02}$\\
		\hline
		\addlinespace[0.75ex]
		\makecell[l]{\textbf{CRN}} &Conf.& $\textup{.665} \pm \textup{.05}$ & $\textup{.081} \pm \textup{.00}$  & $\textup{.357} \pm \textup{.02}$\\
		\makecell[l]{} &Deconf.&$\textbf{.521} \pm \textbf{.06}$ & $\textbf{.073} \pm \textbf{.00}$ & $\textbf{.199} \pm \textbf{.03}$\\
		\bottomrule	
        \end{tabular}}
			\end{sc}
		\end{center}
		\vspace{-2.5em}
\end{table}}
\section{Conclusion}
We propose the Sequential Deconfounder, which exploits the treatment assignment over time to infer substitutes for unobserved confounders. These substitutes are used in conjunction with an outcome model to estimate treatment responses over time. On simulated and real data, we show our method's benefit for estimating treatment responses with unobserved confounders.






\clearpage
\bibliographystyle{plainnat}
\bibliography{library}

\clearpage

\appendix
\onecolumn
\section{Proof of Theorem~1}\label{apx:proof_th1}
In order to prove \Cref{th:sequential_strong_ignorability}, we define the \emph{recursive construction} and prove lemmas that are used for the proof of \Cref{th:sequential_strong_ignorability}. As a reminder, in order to obtain sequential strong ignorablility using the substitute $ Z$ for the unobserved confounder, the following needs to hold:
\begin{equation}
	{Y}(\hist{{a}}_t)  \ci   A_{t} \mid \hist{A}_{t-1},  \hist{X}_{t}, {Z}, 
\end{equation}
for all $\hist{a}_t \in \histcl{A}_t$ and for all $t=1, \ldots, T$.
\begin{definition}
    \textit{(Recursive construction.)
	The sequence of assigned treatments $( A_t)_{t\geq 1}$ admits a recursive construction from $ Z$ and $( X_t)_{t\geq 1}$, if at any timestep $t$, there exist a (deterministic) measurable function $f_t: \cl{Z} \times \cl{X}_t \times \cl{A}_{t-1} \times [0,1] \rightarrow \cl{A}_t$ and random variables $V_t$, such that the distribution of the assigned treatment $ A_t$ can be written as
	\begin{equation}
		A_t = f_t( A_{t-1},  Z,  X_t, V_t),
	\end{equation}
	where $V_t \sim \text{Uniform}([0,1])$ and satisfies
	\begin{equation}
		V_t \ci Y(\hist{a}_t) \mid \hist{A}_{t-1}, \hist{X}_t, {Z}
	\end{equation}
	for all $\hist{a}_t \in \histcl{A}_t$.}
\end{definition}
\begin{lemma}
	\textit{(Recursive construction is sufficient for sequential strong ignorability.)} If the sequence of assigned treatments $( A_t)_{t\geq 1}$ admits a recursive construction from $ Z$ and $( X_t)_{t\geq 1}$ then we obtain sequential strong ignorability.
\end{lemma}
\begin{proof}
	Without loss of generality it is assumed that $\cl{A}_t$ is a Borel space and $\cl{Z}$ and $\cl{X}_t$ are measurable spaces for any $t\in\{1,\ldots, T\}$. Because $( A_t)_{t\geq 1}$ admits a recursive construction from $ Z$ and $( X_t)_{t\geq 1}$, we can write $ A_t = f_t( A_{t-1},  Z,  X_t, V_t)$, where $f_t$ is measurable and 
	\begin{equation}
		V_t \ci Y(\hist{a}_t) \mid \hist{A}_{t-1}, \hist{X}_t,  Z
	\end{equation}
	for all $\hist{a}_t \in \hist{\cl{A}}_t$. This implies that
	\begin{equation}
		( A_{t-1},  Z,  X_t, V_t) \ci Y(\hist{a}_t) \mid \hist{A}_{t-1}, \hist{X}_t,  Z.
	\end{equation}
	Since $A_t$ is a measurable function of $( A_{t-1},  Z,  X_t, V_t)$, sequential strong ignorability holds true, \ie,
	\begin{equation}
		A_t \ci Y(\hist{a}_t) \mid \hist{A}_{t-1}, \hist{X}_t,  Z,
	\end{equation}
	for all $\hist{a}_t \in \histcl{A}_t$ and for all $t=1, \ldots, T$.
\end{proof}
\begin{lemma}\label{lem:CMM}
	\textit{(Conditional Markov model for sequential treatments is sufficient for recursive construction.)} Under weak regularity conditions, if the distribution of the assigned treatments $p(\hist{a})$ can be written as a conditional Markov model, then the sequence of assigned treatments obtains a recursive construction.
\end{lemma}
Regularity condition: the set of treatments $\cl{A}_t$ is a Borel subset of compact intervals. Without loss of generality, it is assumed that $\cl{A}_t = [0,1]$ for all $t=1,\ldots,T$.

The proof for \Cref{lem:CMM} uses Proposition~7.6 in \citet{kallenberg2006foundations} (recursion): Let $(X_t)_{t\geq 1}$ be a sequence of random variables with values in a Borel space $S$. Then $(X_t)_{t\geq 1}$ is Markov if and only if there exist some measurable functions  $f_t: S \times [0,1] \rightarrow S$ and random variables $V_t\stackrel{iid}{\sim} $ Uniform$([0,1])$ with $V_t \ci X_1$ such that $X_t = f_t(X_{t-1}, V_t)$, $\mathbb{P}$-a.s., for all $t\geq 1$. $(X_t)_{t\geq 1}$ is time-homogeneous if and only if $f_1 = f_2 = \ldots = f$.
\begin{proof}
	At timestep $t$, consider the random variables $ A_t \in \cl{A}_t$, $ Z\in\cl{Z}$, and $ X_t\in\cl{X}_t$. Since the distribution of the assigned treatments can be written as a conditional Markov model, we know that the sequence of assigned treatments, when conditioned on ${Z}$ and $ X_t$, \ie, $( A_t \mid  X_t,  Z)_{t}$, is Markov. Hence, from Proposition~7.6 in \citet{kallenberg2006foundations}, there exits some measurable function $f_t:\cl{Z} \times \cl{A}_{t-1}\times \cl{X}_t \times [0,1] \rightarrow \cl{A}_t$ such that 
	\begin{equation}
		A_t = f_t( A_{t-1},  Z,  X_t, V_t),
	\end{equation}
	with $V_t \overset{iid}{\sim}$ Uniform$([0,1])$ and $V_t \ci  A_1$ for all $t=1, \ldots, T$. It remains to show that
	\begin{equation}
		V_t \ci Y(\hist{a}_t) \mid \hist{A}_{t-1}, \hist{X}_{t},  Z.
	\end{equation}	
	This can be seen by distinction of cases. If there exists a random variable $ U_t$ (not equal to $ Z$ or $ X_t$ almost surely) that confounds $V_t$ and $Y(\hist{a}_t)$, it is either (i)~time-invariant or (ii)~time-varying. (i) If $ U_t$ is time-invariant, then it would also confound $V_s$ for $s \neq t$, which introduces dependence between the random variables $V_t$ for $t=1,\ldots, T$. However, since $V_t$ are drawn \emph{iid} from Uniform$([0,1])$, this cannot be the case. Otherwise, $V_t$ and $V_s$ for $s \neq t$ would not be jointly independent. (ii) If $ U_t$ is time-varying, then $ U_t$ would confound $ A_t$ through $V_t$. As a consequence, $ U_t$ would be also a confounder for $ A_t$. However, because of Assumption 3, there cannot be any time-varying confounders for $ A_t$. As a result, there cannot be another random variable that confounds $V_t$, and therefore $V_t \ci Y(\hist{a}_t) \mid \hist{A}_{t-1}, \hist{X}_{t},  Z$ holds true.
\end{proof}

\section{Identifiability of the Individualized Treatment Response}\label{apx:identifiability}
The deconfounder framework introduced by \citet{wang2019blessings} in the static setting initiated several independent discussion regarding new theory and applications \citep{athey2019comment, d2019multi, imai2019discussion, ogburn2019comment, wang2019blessingsrejoiner}. In particular, issues regarding the identification\footnote{In this context, identifiability means that a quantity can be written as a function of the observed data.} of the causal quantity $\Prb{Y\mid do(A=a)}$ were raised in the static setting with multiple treatments \citep{d2019multi}. This issue arises when the substitute confounder cannot be uniquely recovered from the observational data. However, under the conditions in \citet{wang2019blessings}, the identification of the mean potential outcome is indeed given. This relies particularly on the condition of consistent substitute confounders,\footnote{The substitute confounder is consistent, if it is determined by a deterministic function of the treatment assignment \citep{wang2019blessings}.} which allows to uniquely recover the substitute confounder (see Theorem~8 in \citet{wang2019blessings} and explanations in \citet{wang2019blessingsrejoiner, wang2020towards}). 

This discussion also applies to our work. Indeed, if the substitute confounder is not uniquely determined by the Markov model, then we cannot be sure that we recover the correct one. However, the identifiability is given under similar conditions on the consistency of the substitute confounder as in \citet{wang2019blessings}. This ensures that we can pinpoint the substitute confounder and then use it in the outcome model to estimate the individualized treatment response. The proof in our setting is similar to the proof in the static setting with multiple treatments (Theorem~8 in \citet{wang2019blessings}). For completeness, we give a full proof in our setting below. Moreover, the identifiability of the Sequential Deconfounder is also empirically supported by the experimental results in \Cref{sec:simulated_data}, in which the Sequential Deconfounder achieves oracle-near estimation errors.

Before stating the identification results, we first describe the notion of a consistent substitute confounder; we will rely on this notion for identification.
\begin{definition}(Consistency of the substitute confounder.)
	The CMM admits consistent estimates of the substitute confounder $Z$ if, for some function $f$,
	\begin{equation}
		p({z} \mid \hist{x},  \hist{a}) = \delta_{f(\hist{x},  \hist{a})}.
	\end{equation}
\end{definition}
Consistency of substitute confounders yields that we can estimate the substitute confounder $Z$ from the treatment history $\hist{a}$ and the covariate history $\hist{x}$ with certainty. Since it is a deterministic function of the treatment and covariate history, it uniquely pinpoints the substitute confounder. Nevertheless, the substitute confounder need not coincide with the true data-generating $Z$, nor does it need to coincide with the true unobserved confounder. We only need to estimate the substitute confounder $Z$ up to some deterministic bijective transformations (\eg, scaling and linear transformations).
\begin{theorem}(Identifiability of the individualized treatment response.)
	Assume time-invariant confounding and consistent substitute confounders. Then, for the assigned treatment history $\hist{a}$, the individualized treatment response to an alternative assignment $\hist{a}^\prime$ is
	\begin{equation}
		\E[Y(\hist{a}^\prime) \mid Z=z, \hist{X} = \hist{x}] = \E[Y \mid Z=z, \hist{X} = \hist{x}, \hist{A} = \hist{a}^\prime].
	\end{equation}
	This holds when the alternative assignment $\hist{a}^\prime$ leads to the same substitute confounder estimate as the observed treatment history, \ie, $f(\hist{a}, \hist{x})=f(\hist{a}^\prime, \hist{x})$.
\end{theorem}
\begin{proof}
	To prove identification, we rewrite the individualized treatment response
	\begin{align}
		&\E[Y(\hist{a}^\prime) \mid Z=z, \hist{X} = \hist{x}]\\
		&= \E[Y(\hist{a}^\prime) \mid Z=z, \hist{X} = \hist{x}, \hist{A} = \hist{a}]\\
		&= \E[Y(\hist{a}^\prime) \mid Z=f(\hist{a}, \hist{x}), \hist{X} = \hist{x}, \hist{A} = \hist{a}]\\
		&= \E[Y(\hist{a}^\prime) \mid Z=f(\hist{a}, \hist{x}), \hist{X} = \hist{x}, \hist{A} = \hist{a}^\prime]\\
		&= \E[Y \mid Z=f(\hist{a}, \hist{x}), \hist{X} = \hist{x}, \hist{A} = \hist{a}^\prime]\\
	\end{align}
	The first equality makes use of the unconfoundedness given $Z$, and $\hist{X}$ by \Cref{th:sequential_strong_ignorability}. The second equality makes use of the consistency of the substitute confounder. The third equality makes again use of the unconfoundedness and $f(\hist{a}, \hist{x})=f(\hist{a}^\prime, \hist{x})$. The fourth equation is estimable from data, since $f(\hist{a}, \hist{x})=f(\hist{a}^\prime, \hist{x})$. Hence, the identification of $\E[Y(\hist{a}^\prime) \mid Z=z, \hist{X} = \hist{x}]$ is provided.
\end{proof}
The condition $f(\hist{a}, \hist{x})=f(\hist{a}^\prime, \hist{x})$ requires that the changing the assignment does not change the substitute confounder. This is particularly reasonable in the case of time-invariant confounder, since such confounder do no change over time. For instance, considering an alternative treatment assignment or half way though the therapy changing the treatment assignment would not change the housing condition of the patient.

\section{Predictive Checks}\label{apx:predictive_checks}
We provide a sound definition of predictive model checks over time \citep[\eg,][]{bica2019time, rubin1984bayesianly} in order to assess the quality of the fitted CMM in the Sequential Deconfounder. We do this since the CMM has to appropriately capture the distribution of the treatment assignment. Hence, the quality of the causal estimate relies on the quality of the fitted CMM. 

To this end, a CMM is fitted on a training dataset. Then, for each patient in a validation dataset, we sample $M$ treatment assignments $ a_{t, rep}$ from the fitted CMM at each timestep. We compare the samples from the fitted CMM to the actual treatment assignment $ a_{t, val}$ via the predictive $p$-value, which is computed as follows:
\begin{equation}
	\frac1M \sm i M \mathbf{1}(T( a_{t, rep}^{(i)}) < T( a_{t, val})), 
\end{equation}
where $\mathbf{1}(\cdot)$ is the indicator function and $T( a_t)$ is the test statistic defined as
\begin{equation}
	T( a_t) = \mathbb{E}_{ Z}[\text{log}\,p( a_t\mid  X_t,  Z)].
\end{equation}
The test statistics for the treatment samples $ a_{t, rep}$ are similar to the test statistics for the treatments in the validation set, if the model captures the distribution of the assigned treatments well. In this case, the ideal $p$-value is 0.5.

\section{Advantages of our SeqGPLVM}\label{apx:bayesian_uncertainty}
Our theory holds true for any model that infers the latent variable ${Z}$ in the CMM: it does not restrict the class of models that can be used. However, off-the-shelf latent variable models that allow some confounders to be observed and others to be unobserved are scarce, in particular, in the sequential setting. Moreover, there are several advantages of using SeqGPLVM: (i)~The Bayesian approach allows to infer the posterior distribution of the substitute ${Z}$. This allows to quantify the uncertainty originating from the Sequential Deconfounder in the estimate of the individualized treatment responses. This can be done as follows: we first draw $s$ samples $\{z^{(1)},\ldots,z^{(s)}\}$ of the substitute from the posterior distribution, i.e., $z^{(l)}\sim p(z\mid \bar{a}_t, \bar{x}_t)$. For each sample $z^{(l)}$, we then fit an outcome model and compute a point estimate of the individualized treatment response. We aggregate the estimates of the treatment responses from the $s$ samples. The variance of these aggregated estimates describes the uncertainty of the Sequential Deconfounder. (ii)~It enables us to easily sample from the marginal likelihood to obtain samples of treatments. These samples are used to compute predictive model checks over time, which are necessary to assess the model fit of the \mbox{SeqGPLVM} \citep[cf.][]{rubin1984bayesianly}. (iii)~The use of GPs enables us to learn complex and nonlinear relationships from the data, which is needed in medical application involving complex diseases.

\section{Predictive Model Checks for Experiments on Simulated Data}\label{apx:predictive_checks_simulated_data}
\begin{figure}[t] 
	\centering
	\scalebox{0.4}{\includegraphics{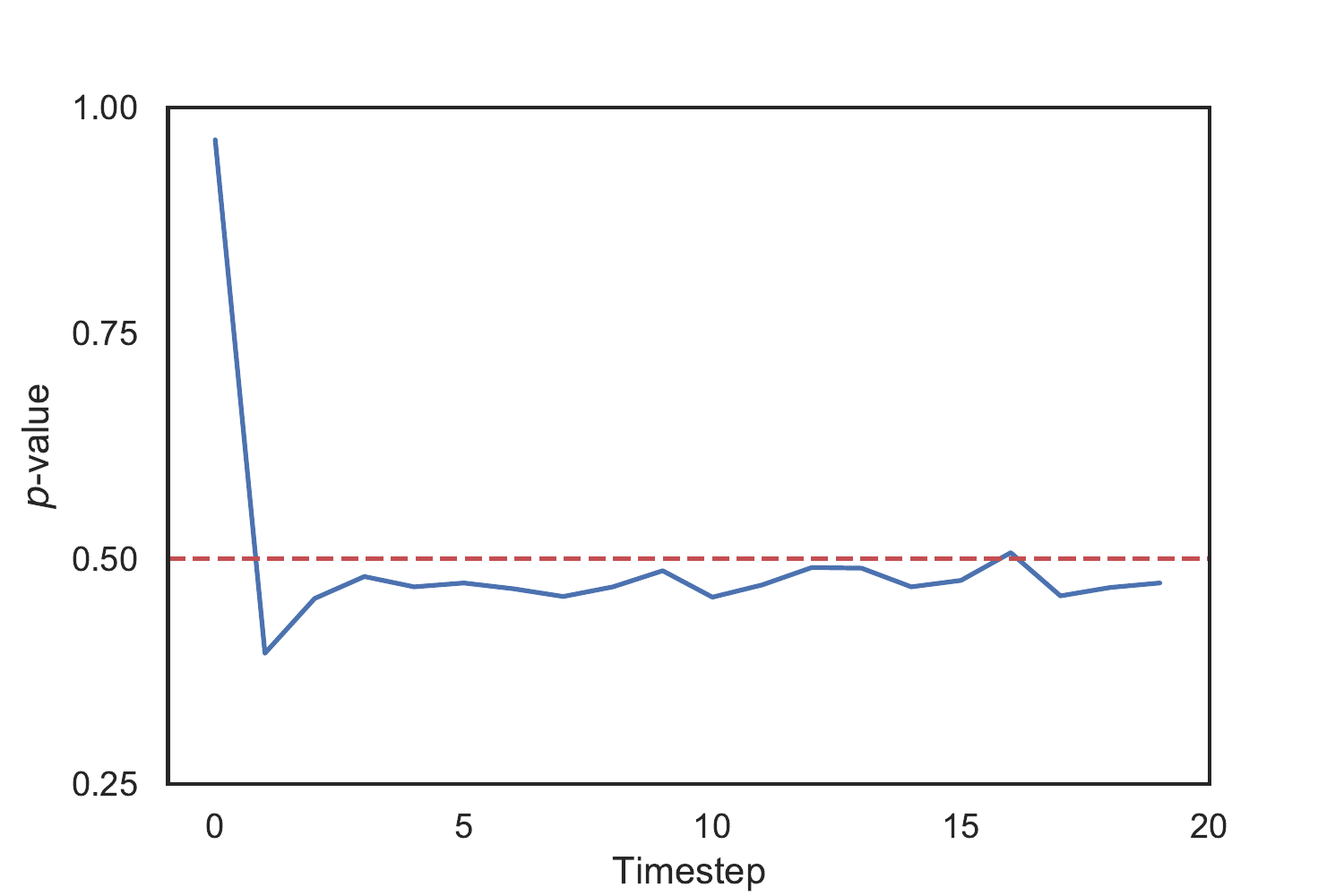}}
	\caption{Predictive checks for SeqGPLVM to assess its quality of capturing the treatment assignment distribution. The red line indicates the optimal $p$-values of $0.5$ as in \citet{bica2019time}. The blue line indicates the $p$-values of SeqGPLVM. We see that SeqGPLVM captures the treatment assignment distribution well as its $p$-values are close to $0.5$.\label{fig:predictive_checks}}
\end{figure}
The quality of the treatment response estimates based on the Sequential Deconfounder relies on how well it captures the distribution of the assigned treatments, since it infers the substitute. Hence, we assess whether the Sequential Deconfounder captures the distribution of the assigned treatments for the experiments on simulated data in \Cref{sec:simulated_data} by performing predictive model checks over time as in \citet{bica2019time}. The results are shown in \Cref{fig:predictive_checks}. We see that the $p$-values are close to the optimal value of $0.5$. Hence, we are confident that the SeqGPLVM captures the distribution of the treatment assignment well.

\section{Additional Results on Real-World Data}\label{apx:results_real_world_data}
Here, we present the results for estimating the individualized treatment response mechanical ventilators over time on three patient outcomes: white blood cell count, blood pressure, and oxygen saturation. Our Sequential Deconfounder in conjunction with the same outcome models as in \Cref{sec:simulated_data} and \Cref{sec:real_data}. The results can be found in \Cref{tbl:mimic_3_result_mech_vens}, where , we illustrate the MSE for one-step ahead estimation of mechanical ventilators.  Compared to the standalone outcome models (``Conf.''), using our Sequential Deconfounder and augmenting the data with the substitutes for the unobserved confounders (``Deconf.'') yields substantially lower MSE for the response to mechanical ventilators. 
{\renewcommand{\arraystretch}{.725}
	\begin{table}[t]
		\caption{\footnotesize Results (MSE) for estimating responses to antibiotics on MIMIC-III. Each outcome model is trained without information about the unobserved confounder $U$ (Conf.) and with the substitute $\hat{{Z}}$ obtained by Sequential Deconfounder (Deconf.) for 10 runs. Lower is better. Best in bold.}\label{tbl:mimic_3_result_mech_vens}
		\begin{center}
			\begin{sc}
				\fontsize{9.5}{9.5}\selectfont
					\begin{tabular}{llccc}
						\multicolumn{5}{l}{\textbf{Dataset 2:} MIMIC-III w/ treatment: Mechanical Ventilator}\\
						\toprule\addlinespace[0.75ex]
						&&\multicolumn{3}{c}{\bf{MIMIC-III (Mean $\pm$ Std)}}\\
						\cmidrule{3-5}\addlinespace[0.75ex]
						&& \makecell[l]{White blood\\cell count} &\makecell[l]{Blood\\pressure} &\makecell[l]{Oxygen\\saturation}\\
						\hline
						\addlinespace[0.75ex]
						\makecell[l]{\textbf{MSM}} &Conf.& $\textup{1.295} \pm \textup{.00}$ &   $\textup{2.418} \pm \textup{.00}$& $\textup{1.013} \pm \textup{.00}$ \\
						\makecell[l]{} &Deconf.& $\textbf{1.182} \pm \textbf{.04}$ & $\textbf{2.147} \pm \textbf{.03}$ & $\textbf{0.897} \pm \textbf{.04}$ \\
						\hline
						\addlinespace[0.75ex]
						\makecell[l]{\textbf{RMSN}} &Conf.& $\textup{1.147} \pm \textup{.05}$ & $\textup{2.173} \pm \textup{.03}$  & $\textup{0.886} \pm \textup{.05}$\\
						\makecell[l]{} &Deconf.&$\textbf{1.007} \pm \textbf{.06}$ & $\textbf{1.854} \pm \textbf{.05}$ & $\textbf{0.677} \pm \textbf{.05}$\\
						\hline
						\addlinespace[0.75ex]
						\makecell[l]{\textbf{CRN}} &Conf.& $\textup{1.112} \pm \textup{.07}$ & $\textup{1.970} \pm \textup{.06}$  & $\textup{0.807} \pm \textup{.05}$\\
						\makecell[l]{} &Deconf.&$\textbf{0.937} \pm \textbf{.08}$ & $\textbf{1.694} \pm \textbf{.08}$ & $\textbf{0.593} \pm \textbf{.06}$\\
						\bottomrule	
                \end{tabular}
			\end{sc}
		\end{center}
\end{table}}

\end{document}


%

%


\onecolumn
\aistatstitle{Appendix for ``Sequential Deconfounding for Causal Inference with Unobserved Confounders''}

\appendix

\section{Proof of Theorem 1}\label{apx:proof_th1}
In order to prove Theorem 1, we define the \emph{recursive construction} and prove lemmas that are used for the proof of Theorem 1. As a reminder, in order to obtain sequential strong ignorablility using the substitute $ Z$ for the unobserved confounder, the following needs to hold:
\begin{equation}
	{Y}(\hist{{a}}_t)  \ci   A_{t} \mid \hist{A}_{t-1},  \hist{X}_{t}, {Z}, 
\end{equation}
for all $\hist{a}_t \in \histcl{A}_t$ and for all $t=1, \ldots, T$.

\textbf{Definition 2.} \textit{(Recursive construction.)
	The sequence of assigned treatments $( A_t)_{t\geq 1}$ admits a recursive construction from $ Z$ and $( X_t)_{t\geq 1}$, if at any timestep $t$, there exist a (deterministic) measurable function $f_t: \cl{Z} \times \cl{X}_t \times \cl{A}_{t-1} \times [0,1] \rightarrow \cl{A}_t$ and random variables $V_t$, such that the distribution of the assigned treatment $ A_t$ can be written as
	\begin{equation}
		A_t = f_t( A_{t-1},  Z,  X_t, V_t),
	\end{equation}
	where $V_t \sim \text{Uniform}([0,1])$ and satisfies
	\begin{equation}
		V_t \ci Y(\hist{a}_t) \mid \hist{A}_{t-1}, \hist{X}_t, {Z}
	\end{equation}
	for all $\hist{a}_t \in \histcl{A}_t$.}

\begin{lemma}
	\textit{(Recursive construction is sufficient for sequential strong ignorability.)} If the sequence of assigned treatments $( A_t)_{t\geq 1}$ admits a recursive construction from $ Z$ and $( X_t)_{t\geq 1}$ then we obtain sequential strong ignorability.
\end{lemma}
\begin{proof}
	Without loss of generality it is assumed that $\cl{A}_t$ is a Borel space and $\cl{Z}$ and $\cl{X}_t$ are measurable spaces for any $t\in\{1,\ldots, T\}$. Because $( A_t)_{t\geq 1}$ admits a recursive construction from $ Z$ and $( X_t)_{t\geq 1}$, we can write $ A_t = f_t( A_{t-1},  Z,  X_t, V_t)$, where $f_t$ is measurable and 
	\begin{equation}
		V_t \ci Y(\hist{a}_t) \mid \hist{A}_{t-1}, \hist{X}_t,  Z
	\end{equation}
	for all $\hist{a}_t \in \hist{\cl{A}}_t$. This implies that
	\begin{equation}
		( A_{t-1},  Z,  X_t, V_t) \ci Y(\hist{a}_t) \mid \hist{A}_{t-1}, \hist{X}_t,  Z.
	\end{equation}
	Since $A_t$ is a measurable function of $( A_{t-1},  Z,  X_t, V_t)$, sequential strong ignorability holds true, \ie,
	\begin{equation}
		A_t \ci Y(\hist{a}_t) \mid \hist{A}_{t-1}, \hist{X}_t,  Z,
	\end{equation}
	for all $\hist{a}_t \in \histcl{A}_t$ and for all $t=1, \ldots, T$.
\end{proof}
\begin{lemma}\label{lem:CMM}
	\textit{(Conditional Markov model for sequential treatments is sufficient for recursive construction.)} Under weak regularity conditions, if the distribution of the assigned treatments $p(\hist{a})$ can be written as a conditional Markov model, then the sequence of assigned treatments obtains a recursive construction.
\end{lemma}
Regularity condition: the set of treatments $\cl{A}_t$ is a Borel subset of compact intervals. Without loss of generality, it is assumed that $\cl{A}_t = [0,1]$ for all $t=1,\ldots,T$.

The proof for \Cref{lem:CMM} uses Proposition 7.6 in \citep{kallenberg2006foundations} (recursion): Let $(X_t)_{t\geq 1}$ be a sequence of random variables with values in a Borel space $S$. Then $(X_t)_{t\geq 1}$ is Markov if and only if there exist some measurable functions  $f_t: S \times [0,1] \rightarrow S$ and random variables $V_t\stackrel{iid}{\sim} $ Uniform$([0,1])$ with $V_t \ci X_1$ such that $X_t = f_t(X_{t-1}, V_t)$, $\mathbb{P}$-a.s., for all $t\geq 1$. $(X_t)_{t\geq 1}$ is time-homogeneous if and only if $f_1 = f_2 = \ldots = f$.
\begin{proof}
	At timestep $t$, consider the random variables $ A_t \in \cl{A}_t$, $ Z\in\cl{Z}$, and $ X_t\in\cl{X}_t$. Since the distribution of the assigned treatments can be written as a conditional Markov model, we know that the sequence of assigned treatments, when conditioned on ${Z}$ and $ X_t$, \ie, $( A_t \mid  X_t,  Z)_{t}$, is Markov. Hence, from Proposition 7.6 in \citep{kallenberg2006foundations}, there exits some measurable function $f_t:\cl{Z} \times \cl{A}_{t-1}\times \cl{X}_t \times [0,1] \rightarrow \cl{A}_t$ such that 
	\begin{equation}
		A_t = f_t( A_{t-1},  Z,  X_t, V_t),
	\end{equation}
	with $V_t \overset{iid}{\sim}$ Uniform$([0,1])$ and $V_t \ci  A_1$ for all $t=1, \ldots, T$. It remains to show that 
	\begin{equation}
		V_t \ci Y(\hist{a}_t) \mid \hist{A}_{t-1}, \hist{X}_{t},  Z.
	\end{equation}	
	This can be seen by distinction of cases. If there exists a random variable $ U_t$ (not equal to $ Z$ or $ X_t$ almost surely) that confounds $V_t$ and $Y(\hist{a}_t)$, it is either (i)~time-invariant or (ii)~time-varying. (i) If $ U_t$ is time-invariant, then it would also confound $V_s$ for $s \neq t$, which introduces dependence between the random variables $V_t$ for $t=1,\ldots, T$. However, since $V_t$ are drawn \emph{iid} from Uniform$([0,1])$, this cannot be the case. Otherwise, $V_t$ and $V_s$ for $s \neq t$ would not be jointly independent. (ii) If $ U_t$ is time-varying, then $ U_t$ would confound $ A_t$ through $V_t$. As a consequence, $ U_t$ would be also a confounder for $ A_t$. However, because of Assumption 3, there cannot be any time-varying confounders for $ A_t$. As a result, there cannot be another random variable that confounds $V_t$, and therefore $V_t \ci Y(\hist{a}_t) \mid \hist{A}_{t-1}, \hist{X}_{t},  Z$ holds true.
\end{proof}


\section{Identifiability of the Individualized Treatment Response}\label{apx:identifiability}
Before stating the identification results, we first describe the notion of a consistent substitute confounder; we will rely on this notion for identification.
\begin{definition}(Consistency of the substitute confounder.)
	The CMM admits consistent estimates of the substitute confounder $Z$ if, for some function $f$,
	\begin{equation}
		p({z} \mid \hist{x},  \hist{a}) = \delta_{f(\hist{x},  \hist{a})}.
	\end{equation}
\end{definition}
Consistency of substitute confounders yields that we can estimate the substitute confounder $Z$ from the treatment history $\hist{a}$ and the covariate history $\hist{x}$ with certainty. Since it is a deterministic function of the treatment and covariate history, it uniquely pinpoints the substitute confounder. Nevertheless, the substitute confounder need not coincide with the true data-generating $Z$, nor does it need to coincide with the true unobserved confounder. We only need to estimate the substitute confounder $Z$ up to some deterministic bijective transformations (\eg, scaling and linear transformations).
\begin{theorem}(Identifiability of the individualized treatment response.)
	Assume time-invariant confounding and consistent substitute confounders. Then, for the assigned treatment history $\hist{a}$, the individualized treatment response to an alternative assignment $\hist{a}^\prime$ is
	\begin{equation}
		\E[Y(\hist{a}^\prime) \mid Z=z, \hist{X} = \hist{x}] = \E[Y \mid Z=z, \hist{X} = \hist{x}, \hist{A} = \hist{a}^\prime].
	\end{equation}
	This holds when the alternative assignment $\hist{a}^\prime$ leads to the same substitute confounder estimate as the observed treatment history, \ie, $f(\hist{a}, \hist{x})=f(\hist{a}^\prime, \hist{x})$.
\end{theorem}
\begin{proof}
	To prove identification, we rewrite the individualized treatment response
	\begin{align}
		&\E[Y(\hist{a}^\prime) \mid Z=z, \hist{X} = \hist{x}]\\
		&= \E[Y(\hist{a}^\prime) \mid Z=z, \hist{X} = \hist{x}, \hist{A} = \hist{a}]\\
		&= \E[Y(\hist{a}^\prime) \mid Z=f(\hist{a}, \hist{x}), \hist{X} = \hist{x}, \hist{A} = \hist{a}]\\
		&= \E[Y(\hist{a}^\prime) \mid Z=f(\hist{a}, \hist{x}), \hist{X} = \hist{x}, \hist{A} = \hist{a}^\prime]\\
		&= \E[Y \mid Z=f(\hist{a}, \hist{x}), \hist{X} = \hist{x}, \hist{A} = \hist{a}^\prime]\\
	\end{align}
	The first equality makes use of the unconfoundedness given $Z$, and $\hist{X}$ by Theorem 1. The second equality makes use of the consistency of the substitute confounder. The third equality makes again use of the unconfoundedness and $f(\hist{a}, \hist{x})=f(\hist{a}^\prime, \hist{x})$. The fourth equation is estimable from data, since $f(\hist{a}, \hist{x})=f(\hist{a}^\prime, \hist{x})$. Hence, the identification of $\E[Y(\hist{a}^\prime) \mid Z=z, \hist{X} = \hist{x}]$ is provided.
\end{proof}
The condition $f(\hist{a}, \hist{x})=f(\hist{a}^\prime, \hist{x})$ requires that the changing the assignment does not change the substitute confounder. This is particularly reasonable in the case of time-invariant confounder, since such confounder do no change over time. For instance, considering an alternative treatment assignment or half way though the therapy changing the treatment assignment would not change the housing condition of the patient.

\section{Predictive Checks}\label{predictive_checks}
We provide a sound definition of predictive model checks over time \citep[\eg,][]{bica2019time, rubin1984bayesianly} in order to assess the quality of the fitted CMM in the Sequential Deconfounder. We do this since the CMM has to appropriately capture the distribution of the treatment assignment. Hence, the quality of the causal estimate relies on the quality of the fitted CMM. 

To this end, a CMM is fitted on a training dataset. Then, for each patient in a validation dataset, we sample $M$ treatment assignments $ a_{t, rep}$ from the fitted CMM at each timestep. We compare the samples from the fitted CMM to the actual treatment assignment $ a_{t, val}$ via the predictive $p$-value, which is computed as follows:
\begin{equation}
	\frac1M \sm i M \mathbf{1}(T( a_{t, rep}^{(i)}) < T( a_{t, val})), 
\end{equation}
where $\mathbf{1}(\cdot)$ is the indicator function and $T( a_t)$ is the test statistic defined as
\begin{equation}
	T( a_t) = \mathbb{E}_{ Z}[\text{log}\,p( a_t\mid  X_t,  Z)].
\end{equation}
The test statistics for the treatment samples $ a_{t, rep}$ are similar to the test statistics for the treatments in the validation set, if the model captures the distribution of the assigned treatments well. In this case, the ideal $p$-value is 0.5.

\section{Predictive Model Checks for Experiments on Simulated Data}\label{apx:predictive_checks_simulated_data}
\begin{figure}[t] 
	\centering
	\scalebox{0.4}{\includegraphics{predictive_checks.pdf}}
	\vspace{-1.0em}
	\caption{Predictive checks for SeqGPLVM to assess its quality of capturing the treatment assignment distribution. The red line indicates the optimal $p$-values of $0.5$ as in \citep{bica2019time}. The blue line indicates the $p$-values of SeqGPLVM. We see that SeqGPLVM captures the treatment assignment distribution well as its $p$-values are close to $0.5$.\label{fig:predictive_checks}}\vspace{-1.3em}
\end{figure}
The quality of the treatment response estimates based on the Sequential Deconfounder relies on how well it captures the distribution of the assigned treatments, since it infers the substitute. Hence, we assess whether the Sequential Deconfounder captures the distribution of the assigned treatments for the experiments on simulated data in Section 6 by performing predictive model checks over time as in \citep{bica2019time}. The results are shown in \Cref{fig:predictive_checks}. We see that the $p$-values are close to the optimal value of $0.5$. Hence, we are confident that the SeqGPLVM captures the distribution of the treatment assignment well.

%
%
%
%
%
%
%

\clearpage
\bibliographystyle{plainnat}
\bibliography{library}